\definecolor{citecolor}{HTML}{0000C0}
\definecolor{urlcolor}{HTML}{000080}
\newtheorem{theorem}{Theorem}
\newtheorem{corollary}{Corollary}
\newtheorem{lemma}{Lemma}
\newclass{\LCL}{LCL}
\newclass{\LD}{LD}
\newclass{\NLD}{NLD}
\newclass{\BPLD}{BPLD}
\newclass{\OI}{OI}
\newclass{\PO}{PO}
\newcommand{\local}{\mathcal{LOC\mspace{-3mu}AL}}
\newcommand{\inp}{\ensuremath{\boldsymbol x}}
\newcommand{\Id}{\mbox{\rm Id}}
\newcommand{\Pp}{\mathcal{P}}
\newcommand{\Rr}{\mathcal{R}}
\newcommand{\cC}{\mathcal{C}}
\newcommand{\cH}{\mathcal{H}}
\renewcommand{\C}{\mathbf{C}}
\newcommand{\B}{\mathbf{B}}
\newcommand{\yes}{\textit{yes}}
\newcommand{\no}{\textit{no}}
\newenvironment{myabstract}
               {\list{}{\listparindent 1.5em%
                        \itemindent    \listparindent
                        \leftmargin    0pt
                        \rightmargin   0pt
                        \parsep        0pt}%
                \item\relax}
               {\endlist}
\newenvironment{mycover}
               {\list{}{\listparindent 0pt
                        \itemindent    \listparindent
                        \leftmargin    0pt
                        \rightmargin   0pt
                        \parsep        0pt}%
                \raggedright
                \item\relax}
               {\endlist}
\begin{document}

\vspace*{2ex}
\begin{mycover}
{\LARGE \textbf{What can be decided locally without identifiers?}}

\bigskip
\bigskip
\textbf{Pierre Fraigniaud}%
\footnote{Additional support from the ANR projects DISPLEXITY, and from the INRIA project GANG.}\\
{\small CNRS and University Paris Diderot, France\\
\nolinkurl{pierre.fraigniaud@liafa.univ-paris-diderot.fr}\par}
\medskip

\textbf{Mika G\"o\"os}%
\\
{\small Department of Computer Science, University of Toronto, Canada\\
\nolinkurl{mika.goos@mail.utoronto.ca}\par}
\medskip

\textbf{Amos Korman}%
\footnotemark[1]\\
{\small CNRS and University Paris Diderot, France\\
\nolinkurl{amos.korman@liafa.univ-paris-diderot.fr}\par}
\medskip

\textbf{Jukka Suomela}%
\footnote{This work was supported in part by the Academy of Finland, Grants 132380 and 252018, and by the Research Funds of the University of Helsinki.}\\
{\small Helsinki Institute for Information Technology HIIT,\\Department of Computer Science, University of Helsinki, Finland\\
\nolinkurl{jukka.suomela@cs.helsinki.fi}\par}
\end{mycover}
\bigskip
\begin{myabstract}
\noindent\textbf{Abstract.}
Do unique node identifiers help in deciding whether a network $G$ has a prescribed property $\Pp$? We study this question in the context of \emph{distributed local decision}, where the objective is to decide whether $G\in\Pp$ by having each node run a constant-time distributed decision algorithm. If $G\in\Pp$, all the nodes should output \emph{yes}; if $G\notin\Pp$, at least one node should output \emph{no}.

A recent work (Fraigniaud et al., OPODIS 2012) studied the role of identifiers in local decision and gave several conditions under which identifiers are not needed. In this article, we answer their original question. More than that, we do so under all combinations of the following two critical variations on the underlying model of distributed computing:
\begin{itemize}[label=$-$,noitemsep]
\item ($\B$): the size of the identifiers is \emph{bounded} by a function of the size of the input network; as opposed to ($\neg \B$): the identifiers are \emph{unbounded}.
\item ($\C$): the nodes run a \emph{computable} algorithm; as opposed to ($\neg \C$): the nodes can compute any, possibly \emph{uncomputable} function.
\end{itemize}
While it is easy to see that under ($\neg \B,\neg \C$) identifiers are not needed, we show that under all other combinations there are properties that can be decided locally if and only if identifiers are present. Our constructions use ideas from classical computability theory.
\end{myabstract}
\bigskip
\begin{mycover}
\textbf{Keywords:} Distributed complexity; local decision; identifiers; computability theory.
\end{mycover}

\thispagestyle{empty}
\setcounter{page}{0}
\newpage

\section{Introduction}

\enlargethispage{3pt} 
In this work we ask and answer a simple question: \emph{Do we need unique node identifiers when locally deciding a graph property?} While this question is a natural one, our answers are somewhat artificial---but only necessarily so.

\paragraph{Local decision.}
A property of graphs $\Pp$ is \emph{locally decidable} if there is a distributed algorithm~$A$ (in the usual $\local$ model; see Section~\ref{ssec:local-model}) with a constant running time $t=O(1)$ that when run on a graph $G$ can decide whether $G\in\Pp$ in the following sense:
\begin{itemize}[noitemsep]
\item if $G\in\Pp$, then $A$ outputs \emph{yes} on every node of $G$, and
\item if $G\notin\Pp$, then $A$ outputs \emph{no} on at least one node of $G$.
\end{itemize}

Here, the output of $A$ on a node $v\in V(G)$ can only depend on the information that is available to within $t$ steps of $v$ in $G$. This includes not only the radius-$t$ neighbourhood topology around $v$, but also---as is often assumed---numerical identifiers $\Id(u)$ for each node $u$ in the neighbourhood. The assignment $\Id\colon V(G)\to\mathbb{N}$ is one-to-one.

\paragraph{Do we need identifiers?}
Recently, \citet{FHK12} asked whether it makes any difference in this context to have $A$'s output depend on the identifiers $\Id(v)$. After all, whether $G$ has the property $\Pp$ or not does not depend on how the nodes of $G$ are labelled with identifiers, and moreover, the usual challenge of \emph{local symmetry breaking} does not arise in the context of decision problems.

Indeed, they conjectured that for any local algorithm $A$ that decides a property $\Pp$ there is an equivalent \emph{Id-oblivious} local algorithm $A^*$ that decides $\Pp$ and that does not use identifiers in the sense that the output of $A^*$ on a node $v\in V(G)$ does not change if we reassign the identifiers, i.e., $A^*(G,\Id,v) = A^*(G,\Id',v)$ for any two assignments $\Id,\Id'\colon V(G)\to\mathbb{N}$.

In this work, we disprove the conjecture. We show that there are graph properties whose local decision requires the output of a constant-time algorithm to depend on the identifier assignment---if the details of the underlying model of distributed computation are set up in a particular way.

\paragraph{Assumptions.}

To understand what our question entails on a technical level, we need to make explicit two critical assumptions about the model of computing.

\emph{Size of identifiers.}
It is commonly assumed that the identifiers are given as $O(\log n)$-bit labels in a graph with $n$ nodes. It is debatable whether it is natural to require bounded identifiers in our case of constant-time algorithms; in any case, we consider both alternatives:
\begin{itemize}[leftmargin=1.5cm,noitemsep]
\item[($\B$)] The size of identifiers is \emph{bounded} by a function of $n$.
\item[($\neg \B$)] The size of identifiers is \emph{unbounded}.
\end{itemize}
Note that, since a local algorithm operates on a graph component-wise, there is no distinction between $(\B)$ and $(\neg\B)$ if we allow all disconnected graphs as input: in either case there will be no bound on $\Id(v)$ as a function of the size of $v$'s component. Thus, in what follows, we work under the promise that the input graph is connected. We will show that whether identifiers help in local decision depends on which of the assumptions $(\B)$ or $(\neg\B)$ we adopt.

\emph{Computability.}
Second, should we restrict the power of local computations? We have two alternatives:
\begin{itemize}[leftmargin=1.5cm,noitemsep]
\item[$(\C)$] The nodes run a \emph{computable} algorithm.
\item[$(\neg\C)$] The nodes can compute any function, possibly \emph{uncomputable}.
\end{itemize}
For many questions in distributed computing, the distinction between $(\C)$ and $(\neg \C)$ is inconsequential and not interesting. However, we will show that whether identifiers help in local decision depends on which of the assumptions $(\C)$ or $(\neg \C)$ we adopt.

\paragraph{Id-oblivious simulation.}
Our results are best motivated by the observation that identifiers are not needed under $(\neg\B,\neg\C)$. Indeed, if $A$ is a $t$-time algorithm deciding a property $\Pp$, we can simulate $A$ by an Id-oblivious $t$-time algorithm $A^*$.

\begin{quote}
{\sffamily Id-oblivious simulation $A^*$:}
For each local neighbourhood $(G',v)$, $G'\subseteq G$, algorithm $A^*$ checks whether there is a local assignment $\Id'\colon V(G')\to\mathbb{N}$ that makes the output $A(G',\Id',v)$ be \emph{no}. If such an assignment exists, we let $A^*$ output \emph{no} on~$v$, too; otherwise, we let $A^*$ output \emph{yes} on $v$.
\end{quote}

We first note that, even though $A^*$ is well-defined, it is not obvious how to compute it, since finding out whether $\Id'$ exists might involve an exhaustive search over an infinite domain. For example, even if $A$ was computable to start with, our $A^*$ is now deciding, a priori, a \emph{computably enumerable} predicate. However, under $(\neg \C)$, this is not a problem.

To see that $A^*$ correctly decides $\Pp$, we note that $A^*$ outputs \emph{no} on some node in $G$, if and only if there is some global assignment $\Id\colon V(G)\to\mathbb{N}$ (i.e., extension of $\Id'$) that makes $A$ output \emph{no} on some node. The identifiers in the assignment $\Id$ may be very large, but under $(\neg\B)$ this is not a problem. Thus, $(G,\Id)$ is a valid input for $A$, and the correctness of $A^*$ now follows from that of $A$.

Our main result in this work is showing that there is no general Id-oblivious simulation in case one of the assumptions $(\B)$ or $(\C)$ is imposed.

\subsection{Our results}

We show that identifiers are necessary in local decision under $(\B)$, and under~$(\C)$.
\begin{theorem} \label{thm:main}
Assume $(\B)$ or $(\C)$. There is a locally decidable property $\Pp$ that cannot be decided with an Id-oblivious local algorithm.
\end{theorem}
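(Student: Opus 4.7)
The plan is to handle the two assumptions separately, constructing in each case a graph property decided locally with identifiers but not by any Id-oblivious algorithm.

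\textbf{Case $(\C)$.} The key observation is that any \emph{computable} Id-oblivious local algorithm $B^*$ decides a \emph{computable} graph property, since $\{G : b^*(G'_v, v) = \yes \text{ for every } v\}$ is a finite conjunction of computable evaluations of $b^*$. It therefore suffices to produce a $\Pp$ that is locally decidable with identifiers by a computable algorithm yet is uncomputable as a graph predicate. I would build this via reduction from the halting problem. Design a family $\{G_{e, m}\}$ of gadget graphs in which the index $e$ is locally recoverable at every vertex (e.g.\ every non-pendant vertex carries exactly $e$ pendants, so $e$ is read off from any node's degree) while the size parameter $m$ is hidden from any fixed-radius view (e.g.\ the non-pendant core is a long cycle). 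Set $\Pp = \{G_{e, m} : \phi_e(e) \text{ does not halt within } |V(G_{e, m})| \text{ steps}\}$. The identifier-using algorithm $A$ extracts $e$ at each node and simulates $\phi_e(e)$ for $\Id(v)$ steps, outputting \no on a halt; by injectivity of $\Id$, some $\Id(v) \geq |V(G_{e, m})|$ in any valid assignment, which suffices to witness halting whenever the property requires it. Since the subfamily indexed by $m$ separates $\Pp$ from its complement exactly at the halting time of $\phi_e(e)$, the property is uncomputable.

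\textbf{Case $(\B)$.} The same property works under bounded identifiers for a structural reason: any Id-oblivious local algorithm (even an uncomputable one) produces the same output at every local view of identical isomorphism type, and since $m$ is invisible locally, the aggregate decision is constant over the subfamily $\{G_{e, m}\}_m$ --- contradicting the split of $\Pp$ across this subfamily whenever $e \in H$. Under $(\B)$ with, say, $f(n) = n$, the identifier-using algorithm works directly, because $\Id(v) \leq n = |V(G_{e, m})|$ automatically caps the simulation at the correct threshold.

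\textbf{Main obstacle.} The delicate point in both cases is correctness of the identifier-using algorithm on \emph{every} identifier assignment, not merely convenient ones; this follows from injectivity of identifiers together with the gadget's size controlling the simulation bound. The most subtle subcase is $(\neg\B, \C)$, where identifiers can vastly exceed $|V(G)|$ and might cause spurious halt detections past the property's $|V(G)|$-step bound. I would handle this by augmenting the gadget with a local structural cap --- e.g.\ a locally-identifiable boundary structure --- that allows each node to infer a safe bound on its simulation budget from the structural information alone, recovering the same correctness as in the bounded case.
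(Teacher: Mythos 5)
Your \textbf{Case $(\B)$} argument is essentially sound in spirit (modulo the cavalier choice $f(n)=n$; the model fixes an arbitrary $f$, and the construction must adapt to it, as the paper's Section~2 does with $R(r):=f(2^{r+1}+1)$). The locality argument — Id-oblivious outputs stabilise over the subfamily $\{G_{e,m}\}_m$ once $m$ exceeds the locality radius, while $\Pp$ flips within that subfamily for any halting $\phi_e(e)$ with long enough run-time — is the right kind of argument and does not need computability assumptions.

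Your \textbf{Case $(\C)$} argument, however, is fundamentally broken. You claim it suffices to exhibit an uncomputable $\Pp$ that is in $\LD$. But your $\Pp = \{G_{e,m} : \phi_e(e) \text{ does not halt within } |V(G_{e,m})| \text{ steps}\}$ \emph{is computable} as a graph predicate: given $G$, extract $e$ and $|V(G)|$ and simulate $\phi_e(e)$ for $|V(G)|$ steps. Worse, the strategy cannot be repaired: under $(\C)$, \emph{every} $\Pp \in \LD$ is computable, because for any connected $G$ you can assign $\Id(v)=1,\dots,n$, run the (computable) local algorithm at each node, and conjoin the outputs — the definition of local decision guarantees this is correct for any one-to-one $\Id$. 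So you will never find an uncomputable $\Pp \in \LD$. The paper's separation under $(\C)$ does not proceed by global uncomputability of $\Pp$ (its $\Pp$ is also computable); rather, it shows that a computable Id-oblivious $t$-decider could be converted, via a \emph{computable neighbourhood generator} $B$ that works even for non-halting machines, into a total computable procedure separating $L_0 = \{M : M \text{ outputs } 0\}$ from $L_1$, contradicting computable inseparability. That intermediate object — a way to enumerate the $t$-neighbourhoods of $G(N,r)$ without first knowing whether $N$ halts — is the crux, and your outline never addresses it.

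The second genuine gap is the $(\neg\B,\C)$ subcase you yourself flag: with unbounded identifiers, a node may simulate far past $|V(G)|$ steps and observe a halt that does not violate the property, causing a spurious rejection of a \emph{yes}-instance. Your proposed patch — ``a locally-identifiable boundary structure that lets each node infer a safe simulation budget'' — does not work as stated: if the budget is inferable from local structure alone, the Id-oblivious algorithm can infer it too, and you lose the hardness; and if you meant something weaker, you have not explained why a node given an enormous identifier won't overshoot. The paper's solution is to change the relationship between the graph and the machine entirely: $G(M,r)$ \emph{contains $M$'s execution table}, so $|V(G)| \geq$ $M$'s running time is automatic and the property ($M$ outputs $0$) depends only on $M$, not on $|V(G)|$. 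Then once $\Id(v)$ exceeds the run-time the simulation result is fixed, and no identifier is ``too large''. This is the key structural idea missing from your proposal.
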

In particular, this separates the classes $\LD$ and $\LD^*$ that were previously conjectured to be equal under $(\neg\B,\C)$ by \citet{FHK12}. Here, $\LD$ is the class of locally decidable properties, and $\LD^*\subseteq \LD$ is the class of properties decidable with an Id-oblivious local algorithm.

We prove the separation $\LD^*\neq \LD$ assuming $(\B,\neg\C)$ in Section~\ref{sec:separation-under-b}, and again assuming $(\C)$ in Section~\ref{sec:separation-under-c}. For the latter, more involved separation, we end up using ideas from classical (sequential) computability theory. The use of these techniques should not come as a surprise given that $\LD^*=\LD$ under $(\neg\B,\neg\C)$ as discussed above. We collect the relationships between $\LD^*$ and $\LD$ in the following table:
\begin{center}
\newlength{\colh} \setlength{\colh}{7pt}
\setlength{\arrayrulewidth}{.7pt}
\newcolumntype{C}{>{\centering\arraybackslash}X}
\def\cell#1{\raisebox{-.5\colh}{#1}}
\begin{tabularx}{7cm}{r|CC|l}
\multicolumn{1}{r}{}
& \multicolumn{1}{c}{$(\C)$}
& \multicolumn{1}{c}{$(\neg\C)$} \\[3pt]
\hhline{~|--|}
\cell{$(\B)$} &  \cellcolor[gray]{.9}\cell{$\boldsymbol\neq$} & \cellcolor[gray]{.9}\cell{$\boldsymbol\neq$} & \cell{$\rightarrow$ Section~\ref{sec:separation-under-b}}\\[\colh]
\hhline{~|~|~|}
\cell{$(\neg\B)$} & \cellcolor[gray]{.9}\cell{$\boldsymbol\neq$}  & \cell{$\boldsymbol=$}  \\[\colh]
\hhline{~|-|-|}
\multicolumn{1}{r}{}
\\[-9pt]
\multicolumn{1}{r}{}
& \multicolumn{3}{l}{\hspace{14pt}\raisebox{1pt}{\rotatebox[origin=c]{180}{\large$\Lsh$}} Section~\ref{sec:separation-under-c}}
\end{tabularx}
\end{center}

Finally, we note that the property $\Pp$ that witnesses $\LD\neq\LD^*$ under $(\C)$ becomes decidable with an Id-oblivious algorithm if we allow \emph{randomness}.
\begin{corollary} \label{cor:random}
Property $\Pp$ can be decided (w.h.p.) with an Id-oblivious randomised local algorithm.
\end{corollary}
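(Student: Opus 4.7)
The plan is to exploit a simple observation: randomness lets us \emph{sample} an identifier assignment rather than \emph{search} for one. The deterministic Id-oblivious simulation $A^*$ was blocked under $(\C)$ because deciding whether some $\Id'$ on the local neighbourhood triggers a \emph{no}-output requires an infinite enumeration, which is merely semi-decidable; by contrast, drawing a single random assignment is always computable.

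Concretely, let $A$ be the computable constant-time local algorithm that decides $\Pp$ using identifiers (which exists since $\Pp \in \LD$ under $(\C)$). I would define the randomised Id-oblivious algorithm $A^{\mathrm{rand}}$ as follows: every node $v$ independently draws a fresh random integer $\rho(v)$ from a polynomially large range (say $\{1,\dots,n^3\}$), the $\rho$-values are gathered within the $t$-neighbourhood, and then each node outputs $A(G,\rho,v)$. By construction $A^{\mathrm{rand}}$ never consults the input identifiers, so it is Id-oblivious.

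For correctness, a union bound over the $\binom{n}{2}$ pairs of nodes shows that $\rho$ is injective with probability at least $1-1/n$; conditioned on this event, $\rho$ is a legitimate identifier assignment in the original model, so the correctness of $A$ applies verbatim: \emph{yes} on every node if $G\in\Pp$, \emph{no} on some node if $G\notin\Pp$. Hence $A^{\mathrm{rand}}$ decides $\Pp$ with high probability.

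The main---though minor---obstacle is that in the $\local$ model the nodes do not necessarily know $n$ in advance, so they cannot literally pick the range $\{1,\dots,n^3\}$. I would circumvent this by having each node sample $\rho(v)$ from a distribution of unbounded support on $\mathbb{N}$ (e.g.\ pick a bit-length from a geometric distribution and then a uniform integer of that length), so that the same union-bound calculation still bounds the collision probability by $O(1/n)$. Note that no step uses specifics of the $\Pp$ constructed in Section~\ref{sec:separation-under-c}: the argument in fact shows that \emph{every} property in $\LD$ under $(\C)$ admits an Id-oblivious randomised decider, so the entire $(\C)$ separation collapses under randomness.
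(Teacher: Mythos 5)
Your approach has a genuine gap, and it is precisely the ``minor'' obstacle you wave off at the end. For a deterministic Id-oblivious algorithm to simulate $A$ on a sampled $\rho$, you need $\rho$ to be injective with high probability, but \emph{no fixed distribution on $\mathbb{N}$ achieves this when $n$ is unknown}. For any distribution $D$ that a node can sample from, the pairwise collision probability $c(D)=\sum_i \Pr_D[i]^2$ is a fixed positive constant, so over $\binom{n}{2}$ pairs the expected number of collisions is $\Theta(n^2)$ and the probability that $\rho$ is injective tends to $0$, not to $1$. Your specific proposal (geometric bit-length, then uniform integer of that length) gives $c(D)=\sum_{\ell} 2^{-2\ell}\cdot 2^{-\ell}=1/7$, a constant, so the ``same union-bound calculation'' does not go through. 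This point is even flagged explicitly in the paper's own proof: an Id-oblivious algorithm \emph{cannot} use randomness to generate a fresh set of globally unique identifiers without any knowledge of $n$.

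The paper circumvents this by not aiming for injectivity at all. It exploits the specific structure of $\Pp$: the deterministic decider only needs \emph{one} node $v$ to possess a number $n_v\ge n$ so that $v$ can finish simulating $M$; it does not need the numbers to be distinct, and it does not care what small numbers the other nodes hold. Producing a single large value is achievable with a heavy-tailed sample (in the paper, $n_v=4^{\ell_v}$ with $\ell_v$ geometric), yielding $\Pr[\forall v\colon n_v<n]\le (1-1/\sqrt n)^n=o(1)$. This is a weaker and attainable goal, and it is specific to $\Pp$; it does not support your closing claim that \emph{every} $\LD$ property under $(\C)$ collapses into the Id-oblivious randomised model. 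That general claim is not established by your argument (and, to my knowledge, not established by the paper either). You should therefore restructure the proof around the observation that the $\LD$ decider for $\Pp$ only uses $\Id(v)$ as a step budget for simulation, and that a lower bound on $n$ suffices, rather than trying to resample an entire identifier assignment.
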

Randomised local decision was previously studied by~\citet{FKP11,FKPP12}. The corollary above indicates, in particular, that in the Id-oblivious model, the threshold result \cite[Theorem~3.3]{FKP11} that pertains to hereditary languages does not hold if we consider all languages.

\subsection{Local decision in the \texorpdfstring{$\local$}{LOCAL} model}\label{ssec:local-model}

A \emph{labelled graph} is a pair $(G,\inp)$, where $G = (V(G),E(G))$ is a simple undirected graph and function $\inp$ associates a \emph{label} or a \emph{local input}, denoted $\inp(v)$, with each node $v \in V(G)$.

A \emph{labelled graph property} is a collection $\Pp$ of labelled graphs that is invariant under graph isomorphism. That is, if $(G,\inp) \in \Pp$, and $(G',\inp')$ is isomorphic to $(G,\inp)$, then $(G',\inp') \in \Pp$. Examples of labelled graph properties include the following:
\begin{itemize}[noitemsep]
    \item ``proper $3$-colouring'': $(G,\inp) \in \Pp$ if $\inp$ is a proper $3$-colouring of $G$,
    \item ``maximal independent set'': $(G,\inp) \in \Pp$ if the nodes with $\inp(v) = 1$ form a maximal independent set in $G$,
    \item ``planar graphs'': $(G,\inp) \in \Pp$ if $G$ is a planar graph (and $\inp$ is arbitrary).
\end{itemize}
In particular, all graph properties can be interpreted as labelled graph properties. If $\Pp$ is a property, we say that any pair $(G,\inp)\in\Pp$ is a \emph{yes-instance} and any pair $(G,\inp)\notin\Pp$ is a \emph{no-instance}.

An \emph{input} is a triple $(G,\inp,\Id)$, where $(G,\inp)$ is a labelled graph and $\Id\colon V(G)\to\mathbb{N}$ is a one-to-one function. We say that $\Id(v)$ is the \emph{unique identifier} of node $v \in V(G)$.

\paragraph{Local algorithms.}

Let $B(v,t) \subseteq V(G)$ consist of the nodes that are within distance $t$ from $v$ in graph $G$. We write $(G,\inp,\Id) \upharpoonright B(v,t)$ for the restriction of the structure $(G,\inp,\Id)$ to $B(v,t)$.

Let $A$ be a function that associates a \emph{local output} $A(G, \inp, \Id, v) \in \{\yes,\no\}$ with each node $v \in V$ for any input $(G,\inp,\Id)$. We say that $A$ is a \emph{local algorithm} with local horizon $t$ if $A(G, \inp, \Id,v) = A(G', \inp', \Id',v)$ whenever $(G,\inp,\Id) \upharpoonright B(v,t) = (G',\inp',\Id') \upharpoonright B(v,t)$. That is, in a local algorithm the local output of node $v$ depends only on the information that is available in the radius-$t$ neighbourhood of node $v$.

We say that local algorithm $A$ is \emph{Id-oblivious} if $A(G,\inp,\Id,v) = A(G,\inp,\Id',v)$ for any two assignments $\Id,\Id'\colon V(G)\to\mathbb{N}$. That is, renumbering the identifiers does not change the output of an Id-oblivious algorithm. Indeed, we may write the output simply as $A(G,\inp,v)$.

While in the above description we have specified a local algorithm as a function that maps local neighbourhoods to local outputs, we could equally well specify a local algorithm from the perspective of networked state machines that exchange messages with each other: graph $G$ is the structure of the network, each node is a computer, each edge is a communication link, all nodes run the same algorithm, and a node $v \in V(G)$ initially knows only $\inp(v)$ and $\Id(v)$. In essence, a local algorithm with local horizon $t$ is equivalent to a distributed algorithm that runs in $t\pm 1$ synchronous communication rounds in the $\local$ model \cite{L92,PelB00}.

\paragraph{Assumptions.}

Under assumption ($\B$), we assume that there is a function $f$ such that $\Id(v) < f(|V(G)|)$ for any input $(G,\inp,\Id)$.

Under assumption ($\C$), we require that local algorithm $A$ is a computable function of the local neighbourhood. Put otherwise, we require that there is a Turing machine $M_A$ such that for any input $(G, \inp, \Id)$ and any node $v \in G$, given a string that encodes node $v$ and the local neighbourhood $(G,\inp,\Id) \upharpoonright B(v,t)$, machine $M_A$ halts and outputs $A(G, \inp, \Id,v)$.

\paragraph{Local decision.}

Local algorithm $A$ \emph{decides} a property $\Pp$ if the following holds for any input $(G,\inp,\Id)$:
\begin{itemize}[noitemsep]
\item if $(G,\inp)\in\Pp$, then $A(G,\inp,\Id,v) = \yes$ for all $v \in V(G)$,
\item if $(G,\inp)\notin\Pp$, then $A(G,\inp,\Id,v) = \no$ for at least one $v \in V(G)$.
\end{itemize}
If there is a local algorithm that decides $\Pp$, we say that $\Pp$ is in class $\LD$. If there is an Id-oblivious local algorithm that decides $\Pp$, we say that $\Pp$ is in class $\LD^*$.

\paragraph{Promise problems.}

While our constructions do not make use of promise problems, we will refer to them in some introductory examples. If we say that we have \emph{promise} $\Pp'$, then we are only interested in inputs $(G,\inp,\Id)$ with $(G,\inp) \in \Pp'$.

In particular, if $(G,\inp,\Id)$ is an input that violates the promise, we do not put any requirements on $A(G,\inp,\Id,v)$. Even if we work under assumption ($\C$), we do not require that machine $M_A$ halts for inputs that violate the promise. Put otherwise, $A$ can be a partial function, undefined for inputs that violate the promise.

\subsection{Related work}

The question of how to locally decide (or verify) languages has been gaining attention in recent years \cite{DHKKNPPW,FHK12,FKP11, GS11,KK07,KKM11,KKP10,KKP11}. Inspired by traditional computational complexity theory, \citet{FKP11} suggested that the study of decision problems may provide new structural insights also in the distributed computing setting. While the original focus was on the $\local$ model, recent work has taken the first steps towards a computational complexity theory in various other contexts of distributed computing \cite{FP12,FRT11,FRT12}.

\paragraph{Local decision.}

The classes $\LD$, $\NLD$ and $\BPLD$ defined by \citet{FKP11} are the distributed analogues of the classes $\P$, $\NP$ and $\BPP$, respectively. The paper \cite{FKP11} provides structural results, develops a notion of local reduction, and establishes completeness results. One of the main results is that, for a large class of languages, called \emph{hereditary languages}, there exists a sharp threshold for randomisation, above which randomisation does not help.

\paragraph{Identifiers and local decision.}

More recently, \citet{FHK12} defined the \emph{Id-oblivious} model, and the corresponding class of languages $\LD^*$, aiming to understand better the role of identities in local decision. They also conjectured that $\LD^*=\LD$. Informally, the conjecture states that for constant time computations, identities do not play any role except for allowing nodes to identify their local neighbourhoods.

Several positive evidences where given supporting this conjecture~\cite{FHK12}. Specifically, it is shown that $\LD^*=\LD$ holds for hereditary languages and languages defined on paths, with a finite set of input values. Moreover, it was shown that equality holds in the non-deterministic setting, i.e., $\NLD^*=\NLD$.

\paragraph{Identifiers and local construction.}

The role of identifiers is different in local algorithms that need to \emph{construct} a solution. From the perspective of construction tasks, it is easy to see that the usual $\local$ model is much stronger than the Id-oblivious model: there are many tasks that are trivial in $\local$ and impossible to solve with an Id-oblivious algorithm (examples: finding an orientation of the edges; 2-colouring a 1-regular graph).

Therefore to ask meaningful questions related to the role of unique identifiers in construction tasks, we usually compare the $\local$ model with models that retain some symmetry-breaking information---two such models are $\OI$, \emph{order-invariant algorithms}, and $\PO$, \emph{port numbering and orientation}.
\begin{itemize}[noitemsep]
    \item In the $\OI$ model \cite{NS93}, the output of an algorithm is not allowed to change if we reassign the identifier while preserving their relative order.
    \item In the $\PO$ model \cite{mayer95local}, there is an ordering on the incident edges, and all edges carry an orientation.
\end{itemize}
Note that model $\OI$ is stronger than the Id-oblivious model: in the Id-oblivious model, $A^*(G,\Id,v) = A^*(G,\Id',v)$ for \emph{any} two assignments $\Id,\Id'\colon V(G)\to\mathbb{N}$, while in the $\OI$ model, we only require this for assignments $\Id,\Id'\colon V(G)\to\mathbb{N}$ that satisfy $\Id(u) < \Id(v) \iff \Id'(u) < \Id'(v)$. This difference makes the $\OI$ model much stronger.

Indeed, it turns out that from the perspective of strictly local algorithms, for many graph problems models $\local$ and $\OI$ are equally strong: \citet{NS93} prove that for problems whose \emph{decision} version can be solved locally, \emph{construction} is possible in $\local$ if and only if it is possible in $\OI$. More recently, \citet{GHS12} shows that there is also a general class of \emph{optimisation} problems for which $\local$, $\OI$ and $\PO$ are equally expressive.

The results of \citet{NS93} and \citet{GHS12} focus on bounded-degree graphs. They also make a subtle technical assumption: each node produces a local output from a constant-size set. This is necessary: \citet{HHRS12} give an example of a natural problem that violates this assumption---and separates $\local$ and $\OI$.

\paragraph{Bounds on $\bm{n}$.}

It turns out that in decision problems, unique identifiers are helpful for one reason, and for one reason only: obtaining an estimate on $n$, the number of nodes. Indeed, by prior work we already know that $\LD^*=\LD$ holds assuming that every node knows an upper bound on the total number of nodes in the input graph~\cite{FHK12}.

Of course we can interpret a decision problem as a very special kind of construction problem, and therefore the present work also shows that some construction problems can exploit numerical identifiers to learn about $n$. However, this is a highly atypical example. For classical graph problems this information does not help a local algorithm---the identifiers are typically used for local symmetry breaking and their numerical magnitude is inconsequential.

However, if we step outside the field of strictly local algorithms, it is common to \emph{assume} that all nodes know the same upper bound on $n$. This is a convenient assumption that often simplifies algorithm design. \citet{KSV11} show that in many cases it is merely a convenience---the knowledge of an upper bound on $n$ is not essential.

\section{Separation under bounded identifiers} \label{sec:separation-under-b}

In this section we work under assumption $(\B,\neg\C)$ and exhibit a locally decidable property $\Pp$ that cannot be decided with an Id-oblivious local algorithm.

Let $f\colon \mathbb{N}\to\mathbb{N}$ be such that $\Id(v)< f(n)$ for all $v\in V(G)$, where $G$ is a connected input graph. The reason identifiers are useful is that they leak information about $n$. For example, if a node is given an identifier $i$, it can deduce that $n > f^{-1}(i)$, where we denote by $f^{-1}(i)$ the smallest $j$ such that $f(j)\geq i$.

\paragraph{Promise problem.}
As an illustration, we first describe a simple promise problem in $\LD\smallsetminus\LD^*$.
\begin{quote}
{\sffamily Promise problem:}
The instances are labelled graphs $(G,r)$ where $G$ is an $n$-cycle and $r\in\mathbb{N}$ is a constant input label. We promise that either $n = r$ or $n = f(r)$.

We have a \emph{yes}-instance if $n = r$ and a \emph{no}-instance if $n = f(r)$.
\end{quote}
Note that $r$-cycles and $f(r)$-cycles cannot be told apart by an Id-oblivious algorithm as they are locally indistinguishable topology-wise when $r$ is large. However, we can solve the problem using identifiers: the $f(r)$-cycles can be rejected, because there is a node with identifier at least $f(r)$, which is too large to be found in the $r$-cycle. (We can exploit assumption $(\neg\C)$ here if $f$ is uncomputable.)

It is not much harder to design a promise-free example in $\LD\smallsetminus\LD^*$---we do this next.

\paragraph{Promise-free problem.}
Define $R(r) := f(2^{r+1} + 1)$. The key idea is that
\begin{itemize}[noitemsep]
    \item if the instance is a complete depth-$r$ binary tree, all identifiers are smaller than $R(r)$,
    \item if the instance is a complete depth-$R(r)$ binary tree, there is an identifier at least~$R(r)$.
\end{itemize}
Intuitively, we can use identifiers to accept ``small'' instances and reject ``large'' instances. The nontrivial part is to make sure that we can also reject instances that are neither small nor large.

A \emph{layered depth-$k$ tree} is a complete binary tree of depth $k$ where, in addition, nodes at each level are connected by a path in the natural order; see Figure~\ref{fig:trees}. Denote by $T_r$ the labelled graph consisting of a layered depth-$R(r)$ tree. Each node of $T_r$ is labelled with $(r,x,y)$, where the coordinates $(x,y)$ indicate the position of the node in the binary tree.

\begin{figure}[t]
    \centering
    \makebox[0pt][c]{\includegraphics[page=3]{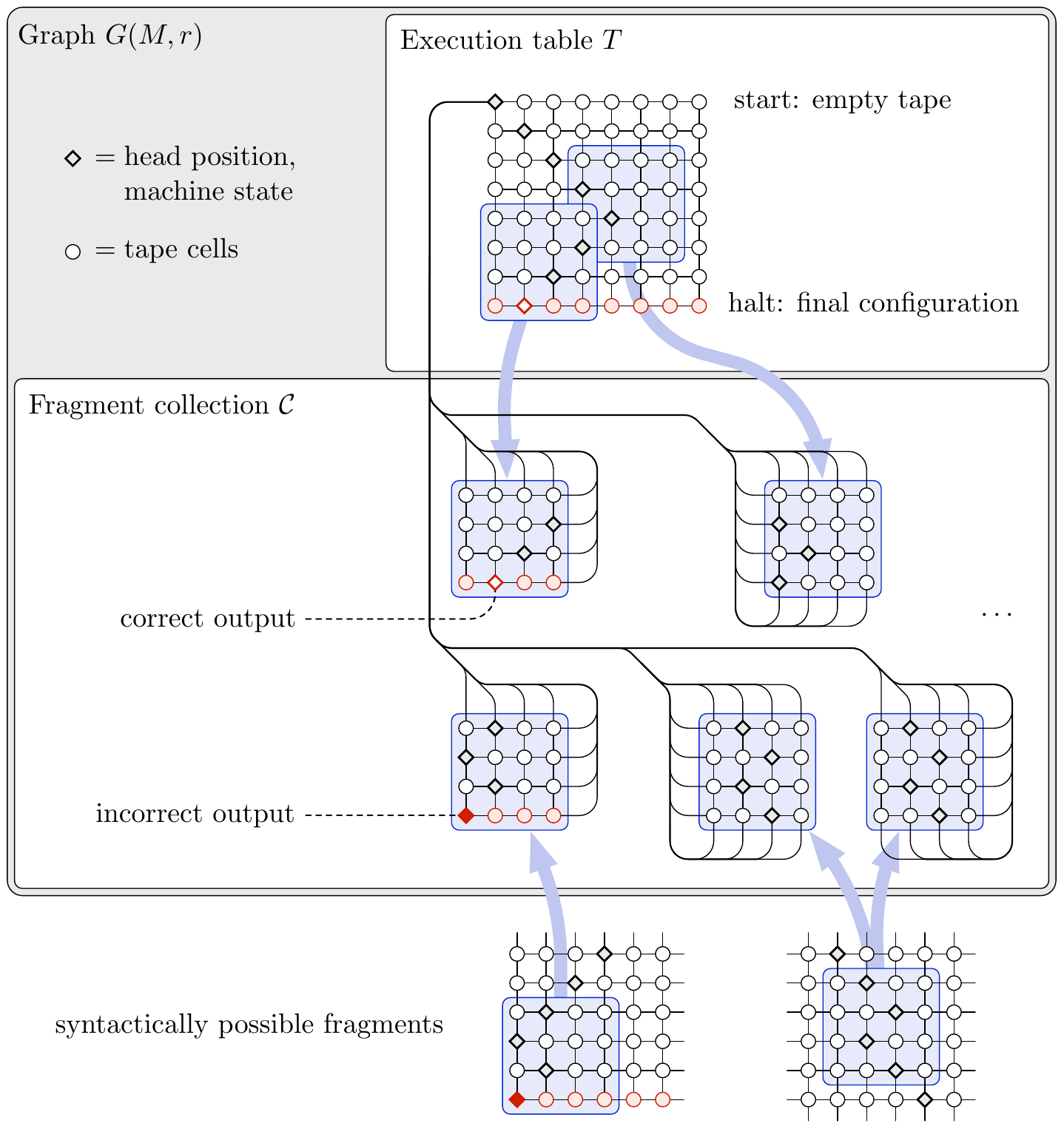}}
    \caption{Graph $T_r$ is a layered tree of depth $R(r) \gg r$. Each graph $H^+ \in \cH_r$ is a layered tree of depth $r$, augmented with a single pivot node (black). The nodes that are far from the boundary (highlighted) have local neighbourhoods that are indistinguishable from the local neighbourhood of the same node in $T_r$.}\label{fig:trees}
\end{figure}

Write $H\leq_r T_r$ if a labelled graph $H$ is an induced subgraph of the labelled graph $T_r$, and the topology of $H$ is a layered depth-$r$ tree. Call $u\in V(H)$ a \emph{border node} if $u$ has a neighbour in $V(T_r)\smallsetminus V(H)$. We define $H^+$ to be $H$ together with a new node (\emph{pivot node}) that is adjacent to all the border nodes of $H$; see Figure~\ref{fig:trees}. We collect $\cH_r :=\{H^+ : H \leq_r T_r \}$. We are now ready to define
\[
    \Pp := \bigcup_{r\geq 0} \cH_r, 
    \qquad
    \Pp' := \Pp \cup \{ T_r : r \geq 0 \}.
\]
We will refer to labelled graphs in $\Pp$ as ``small'' instances and graphs in $\Pp' \smallsetminus \Pp$ as ``large'' instances. Of course instances of $\Pp$ are only small in comparison with the parameter $r$ that is encoded in the labelling of the graph; we have arbitrarily large graphs in both $\Pp$ and $\Pp'$.

We will next show that the construction satisfies the following properties:
\begin{itemize}[noitemsep]
    \item $\Pp' \in \LD^*$, that is, even if we do not have access to unique identifiers, we can verify that the input is \emph{either} small \emph{or} large. Hence we do not need to rely on a promise---we can locally verify it.
    \item $\Pp \in \LD$, that is, we can reject large instances with the help of identifiers,
    \item $\Pp \notin \LD^*$, that is, we cannot distinguish between small and large instances with Id-oblivious algorithms.
\end{itemize}

$(\Pp'\in\LD^*)$:
The overall structure of a layered depth-$R(r)$ tree is straightforward to verify locally with the help of coordinates; we can also easily check that all nodes agree on the value of~$r$. We can verify that the coordinates satisfy $0 \le x < 2^y$ and $0 \le y \le R(r)$, there is no parent iff $y = 0$, there are no children iff $y = R(r)$, etc.

The non-trivial part is the case of a pivot node. The crucial property is that a pivot node sees \emph{all} border nodes of a small instance. Therefore a pivot node can verify that the size of the border (as well as the coordinates of the border nodes) agree with the definition of a small instance.

In essence, if we encounter a pivot node, we must have a small instance: if we fix the structure near the border nodes, and then complete it so that it is locally consistent with the structure of a layered tree, we will arrive at a labelled graph in $\Pp$. On the other hand, if we never encounter a pivot node, we must have a large instance.

$(\Pp\notin\LD^*)$:
Suppose for contradiction that $A^*$ is a $t$-time Id-oblivious algorithm that decides~$\Pp$. For a large enough $r\gg t$, we have that each $t$-neighbourhood in $T_r$ is already found in one of the \emph{yes}-instances in $\cH_r$. But because $A^*$ accepts all of $\cH_r$, it must also accept the \emph{no}-instance $T_r$, which is a contradiction.

$(\Pp\in\LD)$:
The only difficulty in locally deciding $\Pp$ is to be able to reject $T_r$ while accepting all graphs in $\cH_r$. But there is a node in $T_r$ with an identifier at least~$R(r)$, which is too large to be found in the graphs $\cH_r$.

\section{Separation under computability} \label{sec:separation-under-c}

In this section we assume that all local algorithms are computable $(\C)$. We will exhibit a locally decidable property $\Pp$ that cannot be decided by an Id-oblivious local algorithm.

\paragraph{Promise problem.}
Again, to illustrate our approach, we first describe a simple promise problem that separates $\LD^*$ and $\LD$.
\begin{quote}
{\sffamily Promise problem $\Rr$:} The instances are labelled graphs $(G,M)$ such that $G$ is an $n$-cycle; the constant input label $M$ is a Turing machine; and if $M$ halts in exactly $s$ steps (when started on a blank tape) then we promise that $n \geq s$.

We have a \emph{yes}-instance if $M$ runs forever and a \emph{no}-instance if $M$ halts.
\end{quote}

$(\mathcal{R}\in\LD)$: The problem $\Rr$ is locally decidable using identifiers. Indeed, a node with identifier $i$ first simulates $M$ for $i$ steps. Then, if $M$ stops within this many steps, we output \emph{no}; otherwise we output \emph{yes}. For correctness, note that our promise implies that for every \emph{no}-instance $(G,M)$ where $M$ halts, there will be some node $v$ with identifier at least as large as $M$'s run-time, and $v$ will be able to reject $(G,M)$.

$(\mathcal{R}\notin\LD^*)$: On the other hand, it is easy to see that any Id-oblivious algorithm for $\Rr$ has to solve the halting problem without the additional knowledge of $M$'s run-time, which is an uncomputable task.

In this section, our goal is to construct a promise-free version of this decision problem.

\subsection{Overview}
The computationally difficult part in our decision problem $\Pp$ will be to determine whether a given Turing machine $M$ halts and outputs $0$ (when started on a blank tape).

To make $\Pp$ easy for an algorithm using identifiers, we will require that the instance $G$ contains a grid-like locally checkable execution table of $M$. This way---as in the promise problem example---there will be some node $v$ that has an identifier larger than $M$'s run-time. The node $v$ can then locally simulate $M$ to discover its output.

To make $\Pp$ hard for an Id-oblivious algorithm, we need to obfuscate the structure of $G$ so that its local topology does not reveal any useful information about the execution of $M$. In particular, even if $M$ halts, no local neighbourhood of $G$ should certify this fact. This way, an Id-oblivious algorithm is left with trying to find out $M$'s output without any additional means. More formally, such an algorithm would need to separate the languages
\[
L_i := \{ M : M\ \text{outputs}\ i\},\qquad i=0,1,
\]
which is known to be impossible for a computable function:
\begin{lemma}[{e.g.~\cite[p.\ 65]{papadimitriou94computational}}] \label{lem:inseparability}
The languages $L_0$ and $L_1$ are \emph{computably inseparable}, i.e., there is no computable set $R$ such that $L_0 \subseteq R$ and $L_1\cap R = \varnothing$. \qed
\end{lemma}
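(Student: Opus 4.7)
The plan is to argue by diagonalisation, using Kleene's second recursion theorem to construct a machine whose behaviour no computable set $R$ can correctly classify.

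Suppose for contradiction that some computable $R$ satisfies $L_0 \subseteq R$ and $L_1 \cap R = \varnothing$. Its characteristic function $\chi_R$ is then total and computable. I would build a Turing machine $N$ that, starting on a blank tape, (i)~recovers its own description $\langle N \rangle$, (ii)~computes $b := \chi_R(\langle N \rangle)$, and (iii)~halts with output $b$. Such an $N$ exists because the map sending a code $e$ to the code of ``compute $\chi_R(e)$ and halt with that bit'' is total computable, so Kleene's recursion theorem supplies a fixed point $\langle N \rangle$.

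A two-case analysis then closes the argument. If $\langle N \rangle \in R$, then $b = 1$, so $N \in L_1$, contradicting $L_1 \cap R = \varnothing$. If $\langle N \rangle \notin R$, then $b = 0$, so $N \in L_0 \subseteq R$, which contradicts $\langle N \rangle \notin R$. Hence no computable separator exists.

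The only nontrivial step is the appeal to the recursion theorem that lets $N$ access its own G\"odel number; the rest is mechanical. Since the statement is the standard computable-inseparability result cited from Papadimitriou, I do not anticipate any further obstacle beyond carefully recalling this diagonal construction.
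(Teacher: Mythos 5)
Your argument is correct, and it is essentially the standard proof of this classical inseparability result. Note that the paper itself does not supply a proof at all---it simply cites Papadimitriou (the \verb|\qed| at the end of the statement signals that the proof is omitted as standard)---so your diagonalisation via the recursion theorem is a faithful reconstruction of what is being invoked. One small stylistic remark: you have $N$ output $b = \chi_R(\langle N\rangle)$ directly, i.e.\ $N$ \emph{agrees} with $R$'s verdict, and the contradiction then falls out of the asymmetry between the two hypotheses ($L_0\subseteq R$ versus $L_1\cap R=\varnothing$); having $N$ output $1-b$ would give a slightly more symmetric ``diagonal'' contradiction, but both versions are sound. An alternative route that avoids the recursion theorem is to reduce from the halting problem: given a machine $M$ and input $w$, build $N_{M,w}$ that simulates $M$ on $w$ and outputs $0$ if it halts, but also (say, by dovetailing with a known machine in $L_1$) make the non-halting branch land in $L_1$; a computable separator would then decide halting. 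Your recursion-theorem argument is cleaner and more self-contained.
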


\paragraph{Implementation.}
For a pair $(M,r)$, where $M$ halts and $r\in\mathbb{N}$ is a locality parameter, we will construct a graph $G(M,r)$ satisfying the following properties.
\begin{enumerate}[label=(P\arabic*),noitemsep]
\item The execution table of $M$ is contained in $G(M,r)$.
\item It is locally decidable (even in $\LD^*$) whether an instance is of the form $G(M,r)$.
\item The $r$-neighbourhoods of $G(M,r)$ reveal only computable information about $M$. More formally, there is an algorithm $B$ that halts on all inputs $(N,r)$, where $N$ is any Turing machine, and outputs a finite set of $r$-neighbourhoods $B(N,r)$ such that
\[\normalfont
\text{$N$ halts}\quad\Longrightarrow\quad B(N,r) = \{\ \text{$r$-neighbourhoods of $G(N,r)$}\ \}.
\]
Note, especially, that $B$ halts even if $N$ does not!
\end{enumerate}
Suppose for a moment that we have a construction satisfying (P1--P3). We can now define
\[
\Pp := \{ G(M,r) : M \text{ outputs } 0\}.
\]
\begin{theorem}
$\Pp\in\LD\smallsetminus\LD^*$ under $(\C)$.
\end{theorem}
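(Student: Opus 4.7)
The plan is to prove the two containments $\Pp \in \LD$ and $\Pp \notin \LD^*$ separately, with the second being the substantive one.

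For $\Pp \in \LD$, I would design a two-phase algorithm. First, each node uses the Id-oblivious check from (P2) to verify that its local neighbourhood is consistent with $G$ having the form $G(M,r)$; if the check fails somewhere, that node outputs \no. Assuming the check passes everywhere, each node extracts $M$ from the execution table encoded in its local view of $G(M,r)$, simulates $M$ for $\Id(v)$ steps, and outputs \no exactly when that truncated simulation halts with nonzero output. Yes-instances $G(M,r)\in\Pp$ never trigger a \no, since $M$ outputs $0$. For a no-instance $G=G(M,r)$ in which $M$ halts in $T$ steps with output $\neq 0$, property (P1) forces $|V(G)|$ to be at least the size of $M$'s execution table, hence at least $T$; since identifiers are distinct naturals, some node has $\Id(v)\geq T$, completes the simulation, and rejects.

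For $\Pp \notin \LD^*$, I would reduce to computable inseparability of $L_0$ and $L_1$ via Lemma~\ref{lem:inseparability}. Assume towards a contradiction that $A^*$ is a computable Id-oblivious local algorithm of horizon $r$ that decides $\Pp$, and define
\[
R := \{\,N : A^*\text{ accepts every } r\text{-neighbourhood in } B(N,r)\,\}.
\]
The set $R$ is computable: by (P3), $B$ halts on every $N$ and returns a finite collection $B(N,r)$, and $A^*$ can be effectively evaluated on each element. If $N\in L_0$, then $G(N,r)\in\Pp$, so $A^*$ accepts at every node of $G(N,r)$; since $N$ halts, (P3) guarantees $B(N,r)$ equals the set of $r$-neighbourhoods of $G(N,r)$, giving $N\in R$. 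If $N\in L_1$, then $G(N,r)\notin\Pp$, so $A^*$ rejects at some node of $G(N,r)$, and that node's $r$-neighbourhood lies in $B(N,r)$ by (P3), so $N\notin R$. Hence $R$ computably separates $L_0$ and $L_1$, contradicting Lemma~\ref{lem:inseparability}.

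The real obstacle is not this high-level argument but the forthcoming construction of $G(M,r)$ realising (P1)--(P3): the graph must simultaneously be rigid enough to carry a locally-certifiable execution table with $M$ locally recoverable (needed for (P1) and the identifier-based simulation above), be locally recognisable (for (P2)), and have $r$-neighbourhoods that are computable from $M$ alone \emph{without} actually running $M$---so that $B$ can enumerate them even when $M$ diverges (for (P3)). Implicit in the argument is also an output-faithfulness requirement: the map $M \mapsto G(M,r)$ must not identify graphs coming from machines with different outputs, so that $N\in L_1$ really does imply $G(N,r)\notin \Pp$. These are properties of the concrete construction, not the separation argument itself.
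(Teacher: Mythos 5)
Your proposal is correct and takes essentially the same approach as the paper: the $\Pp\in\LD$ direction uses the same two-stage algorithm (local test via (P2), then simulate $M$ for $\Id(v)$ steps, relying on (P1) to guarantee a sufficiently large identifier), and the $\Pp\notin\LD^*$ direction defines the same separating set $R$ via $B(N,r)$ and derives a contradiction from Lemma~\ref{lem:inseparability}. Your closing remark on output-faithfulness of $M\mapsto G(M,r)$ is a fair observation the paper leaves implicit; it is ensured because $(M,r)$ is encoded in every node's label, making the construction injective in $M$.
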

\begin{proof}
$(\Pp\in\LD)$: Given $(G,\Id)$ as input, a node $v\in V(G)$ computes in two stages. First, $v$ performs its local test according to (P2) to see if $G=G(M,r)$ for some $(M,r)$. If this test fails, $v$ outputs \emph{no}. Otherwise $v$ proceeds to the second stage where $v$ locally simulates $M$ for $\Id(v)$ steps. If the simulation finishes and $M$ outputs something other than $0$, then $v$ outputs \emph{no}; otherwise $v$ outputs \emph{yes}.

For correctness, we need only note that in case all nodes pass the first stage, we have that $G=G(M,r)$, and thus, by (P1), there will be some node $v$ with so large an identifier that $v$ will finish the simulation of $M$ in the second stage and discover $M$'s true output.

$(\Pp\notin\LD^*)$: For the sake of contradiction, suppose that an Id-oblivious algorithm $A^*$ with run-time $t$ decides $\Pp$. We show how $A^*$ can be exploited to separate the languages $L_0$ and $L_1$.
\begin{quote}
{\sffamily Separation algorithm $R$:}
Given a Turing machine $N$ we first compute $B(N,t)$. Then, we run $A^*$ on all the $t$-neighbourhoods in $B(N,t)$. We accept $N$ precisely if $A^*$ accepts all of $B(N,t)$.
\end{quote}
First, note that, by (P3), our algorithm $R$ halts on every input $N$. Moreover, suppose that $N$ halts. Then $R$ accepts $N$ iff $A^*$ accepts every $t$-neighbourhood of $G(N,r)$ iff $A^*$ accepts $G(N,r)$ iff $G(N,r)\in\Pp$ iff $N$ outputs $0$. But this contradicts Lemma~\ref{lem:inseparability}.
\end{proof}

Indeed, it remains to give the details of a construction satisfying (P1--P3).

\subsection{Construction of \texorpdfstring{$\bm{G(M,r)}$}{G(M,r)}} \label{ssec:construction}

Let $M$ be a Turing machine that halts. Each node in the graph $G=G(M,r)$ will have $(M,r)$ as part of their input labelling. The graph $G$ will consist of two parts:
\begin{itemize}[noitemsep]
\item the \emph{execution table} $T$ of $M$, and
\item a certain \emph{fragment collection} $\cC$.
\end{itemize}
See Figure~\ref{fig:GMr}.

\begin{figure}
    \centering
    \makebox[0pt][c]{\includegraphics[page=1]{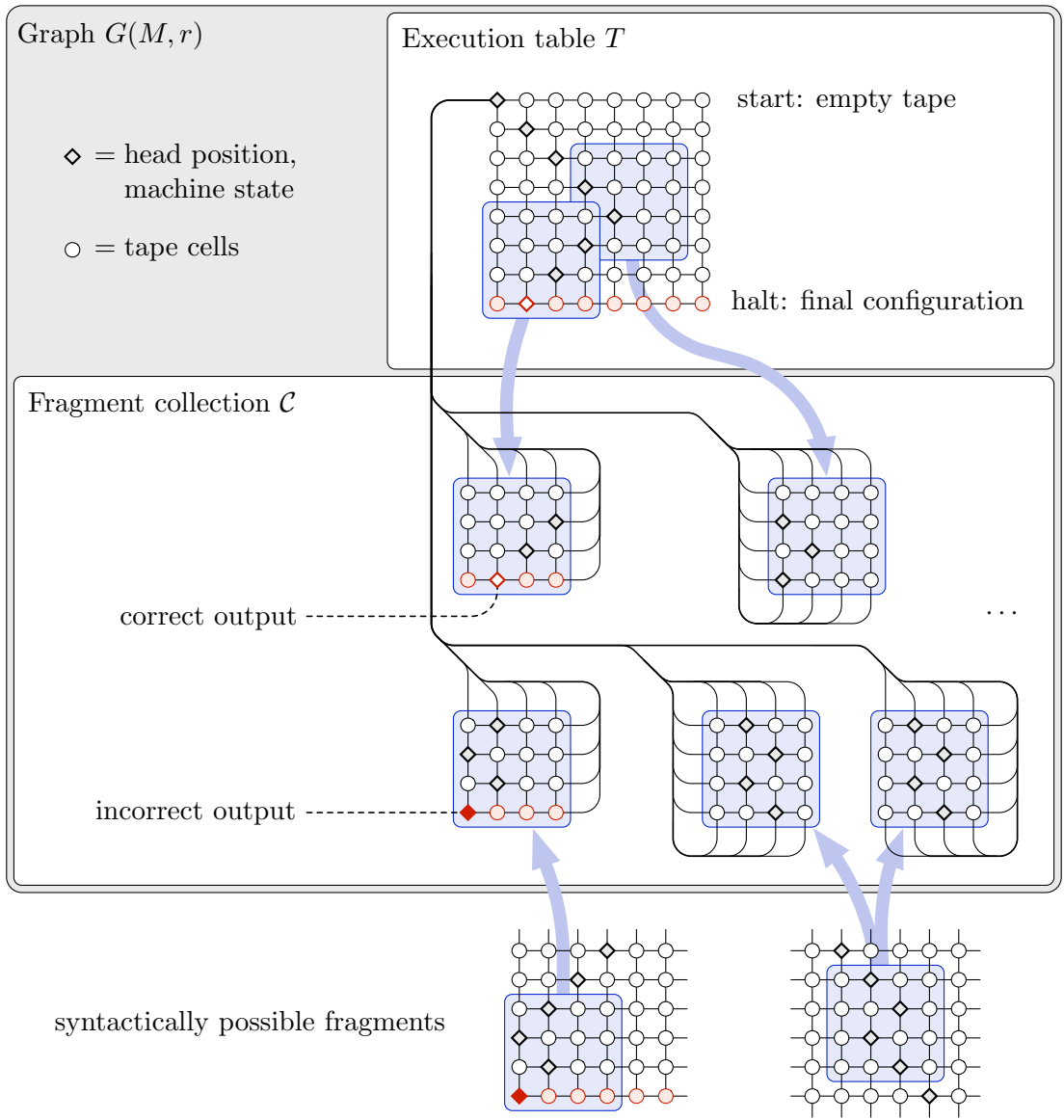}}
    \caption{Construction of graph $G(M,r)$.}\label{fig:GMr}
\end{figure}

\paragraph{Execution table.}
Let $s$ be the running time of $M$.
The execution table $T$ of $M$ will be represented, as per usual, as a labelled square grid graph on nodes $[s+1]\times[s+1]$, where two nodes are adjacent if their Euclidean distance is $1$. We think of the edges of $T$ as being oriented from top to bottom and from left to right. Such an orientation can be locally supplied by labelling $(x,y)$ with $(x\bmod3,\,y\bmod3)$.

\emph{Labels for execution.}
The $i$-th row of $T$ corresponds to the configuration of $M$ before the $i$-th step of the execution: the nodes are labelled with tape cell contents, and the read-write head of the machine is owned by exactly one node per row; this node also records the state of the machine. The first row contains just blank symbols, and the computation starts with the head on the leftmost node, which we call the \emph{pivot node}.

The exact details of this labelling scheme are not important. Any reasonable scheme will do. We only require that the size of the labels is bounded by a computable function of $M$. For example, we cannot allow the nodes on the $i$-th row to hold the number $i$ in their labels, since, intuitively, this would leak information about $M$'s run-time to an Id-oblivious algorithm. (More precisely, this would mess up our construction of $\cC$ below.)

\emph{Local decidability.}
It is well known that valid executions of a Turing machine can be checked locally---at least once we somehow know that the instance is really a labelled square grid and not, e.g., a torus-like graph that locally looks like a grid. To make $T$ locally checkable, we need to augment it with some special structure; we take care of this technicality in Appendix~\ref{app:details}.

\paragraph{Fragment collection.}
The purpose of the fragment collection $\cC$ is to ensure property (P3).

\emph{Intuition.}
If we had $G=T$, an Id-oblivious algorithm could decide whether $M$ output $0$ simply by checking if there was a local neighbourhood in $G=T$ where $M$ is in a halting state with output $0$.

To prevent this from happening, we add superfluous table fragments to $G$. In fact, we will let $G$ contain all syntactically possible execution table fragments. This way, the answer to the question ``Does there exists a local neighbourhood in $G$ where $M$ is in such-and-such a state'' will always be \emph{yes}. In effect, when an Id-oblivious algorithm is exploring $G$ locally, it learns nothing about the execution of $M$ that it could not compute by itself.

\emph{Construction.}
Let $F$ be a $3r \times 3r$ grid graph. Consider labelling $F$ in all possible ways that satisfy the local consistency rules of $T$. That is, we put no limitations on how the boundary nodes are labelled, as long as
\begin{itemize}[noitemsep]
\item the $(\text{mod}\ 3)$-labels give a consistent orientation, and
\item every $2\times 2$ sub-table of $F$ is consistent with the transition function of $M$. 
\end{itemize}
We let $\cC=\cC(M,r)$ consist of these labelled versions of $F$.

The important property here is that every $r$-neighbourhood in $T$ (including those near a boundary of $T$) is found already in some labelled fragment in $\cC$.

\emph{Efficiency.}
The construction of $\cC$ is purely syntactic: for any machine $N$ (that does not necessarily halt), we can efficiently generate $\cC(N,r)$ by a simple enumeration of all possible labellings, as our labelling scheme uses bounded labels. We record this observation.
\begin{lemma} \label{lem:fragments}
There is an algorithm that on input $(N,r)$ outputs the finite collection $\cC(N,r)$. \qed
\end{lemma}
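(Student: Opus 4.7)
The plan is to exhibit an explicit enumeration algorithm and verify that it halts on every input $(N,r)$, including those where $N$ does not halt. First, I would recall the crucial property of the labelling scheme fixed in Section~\ref{ssec:construction}: label sizes are bounded by a computable function $\ell(N)$ of the description of $N$. Hence, from $N$ alone, one can effectively compute the finite alphabet $\Sigma_N$ consisting of all syntactically well-formed labels (cell contents, optional head marker, state symbols) whose length is at most $\ell(N)$. Since $F$ is a fixed $3r\times 3r$ grid on $9r^2$ nodes, the space of candidate labellings $\Sigma_N^{V(F)}$ is finite and its size is computable from $(N,r)$.

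Next, I would argue that local consistency is effectively checkable. Given a candidate labelling $\lambda\colon V(F)\to\Sigma_N$, the algorithm does two things: (i) verify that the $(\bmod\,3)$-coordinate component of each $\lambda(x,y)$ equals $(x\bmod 3,\, y\bmod 3)$, giving a consistent orientation; and (ii) iterate over the $(3r-1)^2$ many $2\times 2$ windows of $F$ and check each against the transition function of $N$. The transition function is a finite object extractable from the description of $N$, so (ii) is a finite case analysis. Neither (i) nor (ii) ever needs to simulate $N$, which is exactly why the procedure halts even when $N$ does not.

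The algorithm itself then enumerates every $\lambda\in\Sigma_N^{V(F)}$, applies the two local tests, collects the surviving labellings, and returns the resulting finite set as $\cC(N,r)$. Correctness is immediate from the definition of $\cC(N,r)$ given in Section~\ref{ssec:construction}, and termination follows from the finiteness of the search space together with the effectiveness of each test.

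There is essentially no mathematical obstacle in this proof; the whole statement formalises the earlier remark that the construction of $\cC$ is purely syntactic. The only subtle point worth flagging is the need to spell out that $\ell(N)$ is computable from $N$, since otherwise $\Sigma_N$ would not be effectively enumerable and the algorithm could not even initialise its search space. Provided the labelling scheme is chosen as described—using e.g.\ the standard encoding of tape contents, state, and head position whose total length is trivially bounded by a computable function of $|N|$—this requirement is automatic and the lemma follows.
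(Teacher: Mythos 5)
Your proof is correct and follows the same route the paper takes: the paper dismisses the lemma as an immediate consequence of the fact that $\cC$ is defined by a purely syntactic enumeration over a finite (and effectively computable) label alphabet, checked by local $2\times 2$ consistency rules that never require simulating $N$. You have simply spelled out the details the paper leaves implicit; there is no divergence in the argument.
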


\paragraph{Putting $\bm{G}$ together.} To construct $G$ we glue together $T$ and the fragments $\cC$. Details follow.

\emph{Natural borders.}
Consider the leftmost column of nodes $C$ in a labelled fragment $F\in\cC$. We call $C$ a \emph{natural border} if $C$ could, in principle, appear on the leftmost column of an execution table of $M$, i.e., if the machine head never moves to, or appears from, the left of $C$. We say that the rightmost column is \emph{natural} under analogous circumstances. The bottom row is \emph{natural} if it does not contain the machine head in a non-halting state. The top row is never natural.

Here is a technical point: we need the non-natural borders to always form a connected subgraph of $F$. The only situation where this is currently violated is when precisely the top and bottom rows of $F$ are non-natural, but this is easily fixed by replacing $F$ with two of its variants where the left and right borders are interpreted non-natural in turn. We now gain the following property, which becomes useful when proving that $G$ is locally decidable.
\begin{quote}
{\sffamily Border property:} Given a subgraph induced on the non-natural borders of a fragment~$F\in \cC$, the local transition rules of $M$ reconstruct $F$ uniquely.
\end{quote}

\emph{Construction.}
The graph $G$ consists of (i)~the table $T$, (ii)~the fragments $\cC$, and also (iii)~new edges that connect each node of a non-natural border in $\cC$ to the pivot node of $T$.

This completes the description of $G$. We leave the straightforward but tedious details of checking that $G$ is locally decidable to Appendix~\ref{app:details}.

\emph{Efficiency.}
Finally, for the purposes of (P3), we note that our construction of $G(M,r)$ is highly explicit in the sense that the set of $r$-neighbourhoods of $G(M,r)$ can be computed even without the knowledge of $M$ halting. 
\begin{quote}
{\sffamily Neighbourhood generator $B$:}
On input $(N,r)$, where $N$ does not necessarily halt, we first compute $\cC=\cC(N,r)$ using Lemma~\ref{lem:fragments}. Then, we begin constructing the (possibly infinite) computation table $T$ of $N$ for some $4r$ rows, each of width $4r$; call the resulting table fragment $T_{4r}\subseteq T$. We then glue $\cC$ to the pivot of $T_{4r}$ as described above to obtain a graph $G_{4r}$. Finally, we output the set of $r$-neighbourhoods in $G_{4r}$ that do not contain nodes from the bottom row of $T_{4r}$.
\end{quote}
The correctness of $B$ follows from the observation that, if $N$ halts, every $r$-neighbourhood in $G(N,r)$ is already found in $G_{4r}$. This establishes property (P3) and completes our proof.

\subsection{Randomisation helps an Id-oblivious algorithm}

To conclude this section, we point to another application of our property $\Pp$, this time in the setting of \emph{randomised} local decision. Namely, we observe that $\Pp$ can be decided by an Id-oblivious algorithm if and only if we allow randomness.

A \emph{randomised} local algorithm has access to an unbounded string of random bits. For $p,q\in(0,1]$, we say that a randomised local algorithm $A$ is a {\em $(p,q)$-decider} for $\Pp$ if the following holds for any input $(G,\inp,\Id)$:
\begin{itemize}[noitemsep]
\item if $(G,\inp)\in\Pp$, then $A(G,\inp,\Id,v) = \yes$ for all $v \in V(G)$ with probability at least~$p$,
\item if $(G,\inp)\notin\Pp$, then $A(G,\inp,\Id,v) = \no$ for at least one $v \in V(G)$ with probability at least~$q$.
\end{itemize}
The power of randomness is still lacking a full characterisation in the context of local decision~\cite{FKP11,FKPP12}.

\paragraph{Randomised Id-oblivious decider for $\bm{\mathcal{P}}$.}
Even though an Id-oblivious algorithm cannot use randomness to generate a fresh set of globally unique identifiers without any knowledge of~$n$, we can still generate a few large numbers with high probability. This suffices for deciding $\Pp$ without identifiers, since, in addition to (P2), we only need some node $v$ to obtain a number $n_v \geq n$ so that $v$ can finish simulating $M$ in $n_v$ steps.

To this end, we let a node $v$ toss a coin repeatedly until a head occurs, say after $\ell_v$ tosses. We set $n_v := 4^{\ell_v}$. The probability that no node has $n_v\geq n$ is then
\[
\Pr[\,\forall v\colon n_v < n\,] \leq (1-1/\sqrt{n})^n = o(1).
\]
That is, with probability at least $1-o(1)$ we can reject an instance $G(M,r)$ where $M$ halts with output other than $0$. Hence, we obtain an Id-oblivious $(1,1-o(1))$-decider for $\Pp$.

This proves Corollary~\ref{cor:random}.

\DeclareUrlCommand{\Doi}{\urlstyle{same}}
\renewcommand{\doi}[1]{\href{http://dx.doi.org/#1}{\footnotesize\sf doi:\Doi{#1}}}
\bibliographystyle{plainnat}
\bibliography{ld-id,medium}

\appendix

\section{Construction details} \label{app:details}

In this appendix we present the details that were skipped in Section~\ref{ssec:construction}.

\paragraph{Pyramidal execution table.}
We describe how to augment the execution table $T$ of $M$ so that it becomes locally checkable. For clarity of exposition, we assume that $s+1$ is a power of~$2$, say $s+1=2^h$ for some $h$---this assumption is easy to remove by modifying the following constructions slightly.

Denote the node set of $T$ by $[2^h]\times[2^h]\times\{0\}$. We use an idea from Section~\ref{sec:separation-under-b}: we attach a pyramid-shaped \emph{layered quadtree} on top of~$T$. That is, let $\widehat{T}$ be the graph that is arranged in layers $z=0,1,\ldots,h$ such that $T$ makes up the $0$-th level; the $z$-th level contains a square grid on nodes $[2^{h-z}]\times[2^{h-z}]\times\{z\}$; and each node $(x,y,z)$ on level $z\leq h-1$ is connected to $(\lceil x/2\rceil,\lceil y/2\rceil,z+1)$ on level $z+1$; see Figure~\ref{fig:pyramid}. The new nodes $V(\widehat{T}) \smallsetminus V(T)$ do not receive labels, except, of course, the universal label $(M,r)$.

\begin{figure}[b]
    \centering
    \includegraphics[page=2]{figs.pdf}
    \caption{Table~$T$ and pyramid~$\widehat{T}$.}\label{fig:pyramid}
\end{figure}

\paragraph{Pyramidal fragments.}
Since our construction is now going to use the pyramidal $\widehat{T}$ instead of $T$, we need to adjust our definition of the table fragments $\cC$ accordingly. Analogously, we consider the pyramidal versions the fragments in $\cC$:
\[
\widehat{\cC} := \{ \widehat{F} : F\in\cC\}.
\]

However, since attaching a pyramid on top of a fragment decreases shortest-path distances between nodes, we need to use larger fragments than in Section~\ref{ssec:construction}. To fool an $r$-time algorithm, it is sufficient that the pyramids $\widehat{F}$ have height $3r$ (i.e., grid-size is $2^{3r}\times 2^{3r}$). This way we recover the critical property: each $r$-neighbourhood that could syntactically arise in $\widehat{T}$ can already be found in $\widehat{\cC}$.

The graph $G(M,r)$ is then defined similarly as in Section~\ref{ssec:construction}: we glue the fragments $\widehat{\cC}$ to the pivot of $\widehat{T}$ by their non-natural borders.

Note also that in verifying the property (P3) we now need the neighbourhood generator $B$ to first construct a sub-table $T_R\subseteq T$ containing some $R=2^{4r}$ initial rows and columns, and then glue $\widehat{\cC}$ and $\widehat{T}_R$ together.

\paragraph{$\bm{G(M,r)}$ is locally decidable.}
Suppose we are given an instance $G$; we argue how to locally decide (even in $\LD^*$) whether $G=G(M,r)$ for some $(M,r)$.

\begin{enumerate}
\item
All nodes first make sure they are given the same pair $(M,r)$ as part of their local input.
\item
Each node in $G$ should then belong to a layered quadtree. By design, the structure of a quadtree is such that the nodes can locally tell apart adjacent layers and recognise the inter-layer edges. In particular, each pyramid has a unique top node, which fixes its global structure.

If the general quadtree structure is consistent, we can ignore all but the bottommost layer of each pyramid, and be convinced that $G$ consists of square grids that are connected together by some inter-grid edges.
\item
The labelling inside each grid should follow the local execution rules of $M$. Also, we should have a consistent orientation on each grid.
\item
The border nodes of a grid can collectively verify that the grid is either \emph{fragment-like} (all nodes in the topmost row are incident to inter-grid edges) or a full execution table (the top-left node is the only node incident to inter-grid edges).
\item
All top-left grid corners should see at least one \emph{pivot candidate} $v$ that is part of a full execution table. But we can impose that any such $v$ is globally unique:
\begin{itemize}
\item
First, $v$'s own execution table, call it $T$, cannot have any other nodes with outgoing inter-grid edges assuming that all nodes in $T$ pass steps 3 and 4.
\item
Second, consider the grids $\cC$ that adjoin $v$. Node $v$ can check that each grid in $\cC$ has fragment-like non-natural borders. In particular, we can check that the non-natural borders form a connected subgraph in each grid---if the bottom row of a grid is non-natural, it is sufficient to verify that one of the side borders is also non-natural. But then, exploiting the \emph{Border property} from Section~\ref{ssec:construction}, $v$ can figure out the exact structure of $\cC$ provided the nodes in $\cC$ have passed step 3. It follows that there are no inter-grid edges unseen by $v$.
\end{itemize}
This establishes the uniqueness of $v$.
\item
Finally, $v$ can check that $\cC = \cC(M,r)$ using Lemma~\ref{lem:fragments}.
\end{enumerate}

\end{document}